\long\def\comment#1{}
\newfont{\bb}{msbm10 scaled 1000}
\newcommand{\CC}{\mbox{\bb C}}
\newcommand{\EE}{\mbox{\bb E}}
\newfont{\bbsmall}{msbm10 scaled 700}
\newcommand{\ev}{{\bf e}}
\newcommand{\wv}{{\bf w}}
\newcommand{\xv}{{\bf x}}
\newcommand{\yv}{{\bf y}}
\newcommand{\Gm}{{\bf G}}
\newcommand{\Hm}{{\bf H}}
\newcommand{\Id}{{\bf I}}
\newcommand{\Rm}{{\bf R}}
\newcommand{\Wm}{{\bf W}}
\newcommand{\Xm}{{\bf X}}
\newcommand{\Ym}{{\bf Y}}
\newcommand{\Cc}{{\cal C}}
\newcommand{\Dc}{{\cal D}}
\newcommand{\Kc}{{\cal K}}
\newcommand{\Nc}{{\cal N}}
\newcommand{\Oc}{{\cal O}}
\newcommand{\alphav}{\hbox{\boldmath$\alpha$}}
\renewcommand{\det}{{\hbox{det}}}
\newcommand{\trace}{{\hbox{tr}}}
\newcommand{\herm}{{\sf H}}
\newtheorem{thm}{Theorem}
\newtheorem{lem}[thm]{Lemma}
\newtheorem{note}{Remark}
\newcommand{\mmse}{{\sf mmse}}
\title{Channel State Feedback over the MIMO-MAC}
\author{K.~Raj~Kumar and
        Giuseppe Caire,~\IEEEmembership{Fellow,~IEEE}%
\thanks{K. R. Kumar is with the Laboratoire d'algorithmique, Ecole Polytechnique F\'{e}d\'{e}rale de Lausanne,
1015 Lausanne, Switzerland ({\tt raj.kumar@epfl.ch}). G.~Caire is with the Department of Electrical Engineering - Systems,
University of Southern California, Los Angeles, CA 90089, USA ({\tt
caire@usc.edu}). Much of this work was carried
out while K.~R.~Kumar was with the University of Southern California.}%
\thanks{The material in this paper was presented in part at the IEEE International Symposium on Information
Theory (ISIT-09), Seoul, Korea, Jun.~28 - Jul.~3, 2009.}%
}
\begin{document}

\maketitle

\markboth{Submitted to IEEE Transactions on Information Theory}%
{Kumar and Caire: Channel State Feedback over the MIMO-MAC}

\maketitle

\begin{abstract}
We consider the problem of designing low latency and low complexity
schemes for channel state feedback over the MIMO-MAC (multiple-input multiple-output
multiple access channel). We develop a framework for analyzing this
problem in terms of minimizing the MSE distortion, and come up with
separated source-channel schemes and joint source-channel schemes
that perform better than analog feedback.  
We also develop a strikingly simple code design based on scalar quantization and uncoded QAM modulation that 
achieves the theoretical asymptotic performance limit of the separated approach with very low complexity and latency, in the case of
single-antenna users. 
\end{abstract}

\begin{IEEEkeywords}
Multiple-Antenna Multiple-Access Channel, Channel State Feedback, Wireless Communications
\end{IEEEkeywords}


\section{Introduction}

\IEEEPARstart{C}{onsider} a frequency division duplex (FDD) cellular system with
sufficient frequency spacing between the uplink and downlink
channels, such that the uplink and downlink fading coefficients are
independent.  A base station (BS) with $M$ antennas wishes to serve
$K$ user terminals (UTs), with $N_t$ antennas each, using some MIMO
broadcast channel precoding technique, such as linear beamforming,
Dirty-Paper Coding or some low-complexity non-linear precoding
approximation thereof.  Essential to these techniques is the
availability of accurate channel state information  at the
transmitter (CSIT), that is, the BS must know the user downlink
channels. 
We consider the popular block-fading model, according to which the channel coefficients remain constant for
time-frequency slots of some finite but large number of channel uses (complex dimensions), 
and change independently from slot to slot.  We  assume that the UTs have perfect knowledge of their
downlink channels on each block, which can be obtained from 
downlink training symbols broadcasted by the BS.  
Then, at each time slot, the downlink channel coefficients must be 
fed back to the BS on the uplink. We can model this CSIT feedback as signaling 
over a MIMO multiple-access channel (MAC). It has been shown in
\cite{Jin,CaiJinKobRav} that the relevant performance measure that
dominates the downlink rate gap\footnote{The gap between the rate
achievable with imperfect CSIT from the optimal rate achievable with
perfect CSIT and the optimal MIMO broadcast channel coding
strategy.} is the mean-square error (MSE) distortion 
at which the BS is able to represent the UTs channel coefficients (this will be made more
precise in the sequel). It follows that the CSIT feedback problem
consists of lossy transmission under an end-to-end 
MSE distortion constraint of a set of $K$ Gaussian sources 
(i.e., the UTs downlink channel coefficients) over a MIMO-MAC channel 
affected by block fading. 
It should also be noticed that, apart from some ``analog''
feedback schemes where the channel coefficients are 
sent unquantized and in parallel by all users (see \cite{Jin,CaiJinKobRav} and
references therein), very little attention has been devoted to
properly design the CSIT feedback scheme exploiting the MIMO-MAC
nature of the uplink channel: most works assume perfect feedback at
fixed rate, or implicitly assume that the feedback information is
piggybacked ``somehow'' into the uplink transmissions. This may pose
problems, since the CSIT feedback must have extremely low latency,
therefore, its coding block length must be very short. In this
paper we consider the problem of designing very low latency and
low complexity CSIT feedback  schemes for the uplink MIMO-MAC.

\section{Channel Model and Problem Statement\label{sec:channel_model}}

Although the fading slots may span a large number of channel uses, we focus here only on the dimensions effectively used by the 
CSIT feedback scheme. We assume that the CSIT feedback uplink transmission spans 
a block of $T$ channel uses of the complex baseband equivalent MIMO-MAC channel, represented by 
\begin{equation} \label{eq:MAC}
\Ym = \sqrt{\rho} \sum_{i=1}^K \Hm_i \Xm_i + \Wm,
\end{equation}
where $\Ym$ denotes the $M \times T$ signal received at the BS,
$\Xm_i$ is the $N_t \times T$ signal transmitted by the
$i^{\text{th}}$ UT, $\Hm_i$ is the (uplink) channel between the
$i^{\text{th}}$ UT and the BS that is assumed to remain constant
over $T$ channel uses, and $\Wm$ is the additive noise. We assume
that the entries of $\Hm_i$ and $\Wm$ are distributed as i.i.d.
complex Gaussian with zero mean and unit variance $\mathcal{CN}
(0,1)$.  As anticipated before, we assume that the channels remain constant for a
such block of $T$ channel uses. 
We impose an average per-user power constraint
\[ \mathbb{E} \left[ \left\| \Xm_i \right\|_F^2 \right] \leq T \ \ \forall \ i, \]
where $\| \cdot \|_F$ denotes the Frobenius norm. Hence $\rho$ takes
on the meaning of the {\em uplink} transmit SNR.

We consider that each UT needs to transmit $S$ samples of a complex
i.i.d. Gaussian source\footnote{The source length $S$ corresponds to
the overall number of channel coefficients that need to be fed back
at any time slot.}  over $T$ uses of the MIMO-MAC. We define $b
\triangleq T/S$ as the {\em bandwidth efficiency} of the feedback scheme, measured as the number 
$b$ of feedback (uplink) channel uses per source sample (i.e., per downlink channel coefficient). 
Therefore, it is clear that the bandwidth efficiency immediately relates to the ``protocol overhead'' incurred by a closed-loop 
channel state feedback scheme. Associated with a particular $b$, we consider a family of {\em coding schemes}
$\mathcal{C}(\rho)$ indexed by the SNR $\rho$. Corresponding to the
coding scheme $\mathcal{C}(\rho)$, we define $D(\rho)$ to be the MSE distortion involved in reproducing the
source, averaged over the source, channel and the noise. The {\em
distortion SNR exponent} of the family is given by \cite{CaiNar,GunErk,BhaNarCai}
\[ \delta (b) = - \lim\limits_{\rho \rightarrow \infty} \frac{\log D(\rho)}{\log \rho}.\]
The distortion SNR exponent {\em of the channel} $\delta^*(b)$ is
given by the supremum of $\delta(b)$ over all possible coding
schemes.

The operational significance of the distortion exponent defined above for the CSIT feedback problem is
provided next, as the main motivation for the results of this paper.
The key quantity that dominates the rate gap in the MIMO downlink
with imperfect CSIT is the system self-interference, due to imperfect 
cancellation of the multiuser interference due to the imperfect knowledge of the downlink channels.
The self-interference average power is given by the product between the 
downlink transmitted signal power and the channel MSE distortion at the BS. 
Up to constants, this is given by \cite{Jin,CaiJinKobRav} $\rho \cdot D(\rho)$, 
provided that the uplink and the downlink SNRs differ by a constant factor.
In the high-SNR limit ($\rho \rightarrow \infty$), 
if the above product vanishes, then
the asymptotic high-SNR rate gap goes to zero and the imperfect CSIT
scheme yields close-to-optimal performance. If the above product
converges to a constant, then a constant rate gap is achieved. If
the above product grows as a power $\rho^\mu$, with  $0 \leq \mu \leq 1$, then the downlink
sum-rate scales only as $\min\{M,KN_t\} (1 - \mu) \log \rho$, i.e.,
a loss of degrees of freedom by the factor $1 - \mu$ occurs.
Finally, if the above product grows as $\rho$ (i.e., $D(\rho) = O(1)$), then the user rates saturate to
a constant term and the system becomes self-interference limited (unless
only a single user is served at each slot, of course). 
It is therefore clear that the MIMO downlink performance depends
critically on the distortion exponent $\delta(b)$. 

From a practical viewpoint it should be noticed that the dynamic range of SNR in a cellular system in the presence of distance-dependent pathloss
may be of the order of 40 dB, i.e., the receiver SNR of UTs near the BS may be 40 dB larger than the received SNR of UTs at the cell edge.
Hence, it is essential to design a CSIT feedback scheme whose MSE distortion $D(\rho)$ 
{\em improves} with the received SNR $\rho$. 
Schemes based on quantization codebooks of fixed size, provide $D(\rho) = O(1)$ (i.e., $\delta(b) = 0$ for any $b$). 
The analog feedback scheme (see \cite{Jin,CaiJinKobRav}  and references therein) 
achieves $\delta(b) = 1$ for all $b \geq 1$. This is optimal for $b = 1$, but generally suboptimal
for $b  > 1$.  Our goal is to study more efficient techniques that achieve $\delta(b) > 1$ at some $b \geq 1$.

We first present an upper bound on the distortion SNR exponent in
Sec.~\ref{sec:upper bound} and then analyze the performance of a
separated source-channel coding scheme in Sec.~\ref{sec:separated}.
The sub-optimality of the separated scheme prompts us to investigate
joint source-channel coding schemes for the MIMO-MAC, following
which we present a hybrid digital-analog coding scheme in
Sec.~\ref{sec:hybrid}. In Sec.~\ref{sec:QAM_DMT}, we construct simple channel codes that have optimal 
performance for the important special case where the UTs have a single antenna each. Simulation results that highlight the
superiority of digital feedback over conventional analog feedback,
and establish the relevance of the distortion SNR exponent in
practical scenarios are presented in Sec.~\ref{sec:simul}.

\section{Upper bound on the distortion SNR exponent\label{sec:upper bound}}

Our upper bound is inspired by the one presented for the single-user
MIMO block fading channel in \cite{CaiNar,GunErk}.
First, notice that we consider the simple case of UTs with symmetric
statistics, and since the UTs have no uplink channel state
information, they all send the same coding rate, by symmetry. Define
$\Hm = [\Hm_1 \Hm_2 \cdots \Hm_K]$. Consider an augmented channel
where all UTs are provided with the knowledge of the coding rate
$R(\Hm)$ that can be transmitted such that the rate point $\Rm(\Hm)
= (R(\Hm),\ldots,R(\Hm))$ is inside the capacity region of the
MIMO-MAC with given channel vectors and transmit SNR $\rho$, denoted
by $\Cc_{\text{MAC}}(\Hm; \rho)$. Using this knowledge, each UT can
employ a separated source-channel coding scheme with a source coding
rate $R_s = b R(\Hm)$ nats per complex sample, and achieve an
end-to-end instantaneous distortion given by $\Dc(\Hm) = \exp(-b
R(\Hm))$. This would result in the average distortion $D(\rho) = \EE[
\exp(-bR(\Hm))]$.

From the expression of the MIMO-MAC capacity region $\Cc(\Hm;\rho)$
we have the following conditions on the rate $R(\Hm)$,
\begin{equation} \label{rate-bound-difficult}
R(\Hm) \leq \frac{1}{|\Kc|} \log \det \left( \Id + \rho \sum_{k \in
\Kc} \Hm_k \Hm_k^\herm \right),
\end{equation}
for all subsets $\Kc \subseteq \{1,\ldots,K\}$, where $|\Kc|$
denotes the cardinality of the set $\Kc$. Denote $\Hm_\Kc$ to be the
equivalent $M \times |\Kc|N_t$ channel comprising of the channels of
the users in $\Kc$ stacked next to each other. We can rewrite
\eqref{rate-bound-difficult} as
\begin{equation*}
R(\Hm) \leq \frac{1}{|\Kc|} \log \det \left( \Id + \rho \Hm_\Kc
\Hm_\Kc^\herm \right).
\end{equation*}
Let $m_\Kc \triangleq \min\{ M,|\Kc|N_t \}$ and $n_\Kc \triangleq
\max\{ M,|\Kc|N_t \}$. We define $\lambda_1 \leq \cdots \leq
\lambda_{m_\Kc}$ to be the $m_\Kc$ ordered non-zero eigenvalues of
$\Hm_\Kc \Hm_\Kc^\herm$, and rewrite the above as
\begin{equation*}
R(\Hm) \leq \frac{1}{|\Kc|} \log \left[ \prod_{i=1}^{m_\Kc} (1 +
\rho \lambda_i) \right].
\end{equation*}

Shannon's source-channel separation theorem holds for a single-user
system and any fixed channel. Therefore, as far as each of the above
constraints are concerned, we have a situation completely equivalent
to a single-user MIMO channel where the $|\Kc|$ users act as a
single cooperative transmitter with a source of length $|\Kc|S$,
operating over a MIMO channel with block length $T$ and channel
coding rate $|\Kc|R(\Hm)$. It follows that the corresponding
distortion obtained by this genie-aided scheme is a lower bound on
any achievable distortion for the actual MIMO-MAC. This is given by
\begin{eqnarray} \label{eq:D_LB}
D^{\rm LB}(\rho) & = & \mathbb{E} \left[ \exp(-b R(\Hm)) \right] \nonumber\\
&\geq& \mathbb{E} \left[ \prod_{i=1}^{m_\Kc} \left( 1 + \rho
\lambda_i \right)^{-\frac{b}{|\Kc|}} \right].
\end{eqnarray}
We set\footnote{We use the exponential equality notation of
\cite{ZheTse},
\[ x \doteq y \Leftrightarrow \lim\limits_{\rho \rightarrow \infty} \frac{\log x}{\log \rho} = \lim\limits_{\rho \rightarrow \infty} \frac{\log y}{\log \rho}\]
} $\lambda_i \doteq \rho^{-\alpha_i}$. The joint pdf of the random
vector $\alphav = (\alpha_1,\hdots,\alpha_{m_\Kc})$ is given as
\cite{ZheTse}
\begin{equation} \label{eq:joint_pdf}
p(\alphav) \doteq \left\{
\begin{array}{cc}
\rho^{-\sum_{i=1}^{m_\Kc} (2i-1+n_\Kc-m_\Kc) \alpha_i}, & \alpha_1
\geq \cdots \geq
\alpha_{m_\Kc} \geq 0\\
\rho^{-\infty}, & \text{ otherwise}
\end{array}
\right. .
\end{equation}
Using \eqref{eq:joint_pdf} to compute \eqref{eq:D_LB} and an
application of Varadhan's lemma as in \cite{ZheTse} results in an
upper-bound on the distortion SNR exponent
\begin{eqnarray*}
\delta^*(b) \leq \inf\limits_{\alpha_1 \geq \cdots \geq
\alpha_{m_\Kc} \geq 0} \sum_{i=1}^{m_\Kc} \left[ \frac{b}{|\Kc|}
(1-\alpha_i)^+ \right. \\
\left. \phantom{\frac{b}{|\Kc|}} + (2i-1+n_\Kc-m_\Kc) \alpha_i \right],
\end{eqnarray*}
where $(x)^+ \triangleq \max \{0,x\}$.
It can be verified that the
above infimum is achieved by
\[ \alpha_i^* = \left\{
\begin{array}{cc}
0, & \frac{b}{|\Kc|} \leq 2i-1+n_\Kc-m_\Kc\\
1, & \frac{b}{|\Kc|} > 2i-1+n_\Kc-m_\Kc
\end{array}
\right. .\] Hence
\[ \delta^*(b) \leq \sum_{i=1}^{m_\Kc} \min \left\{ \frac{b}{|\Kc|}, 2i-1+n_\Kc-m_\Kc \right\}, \]
$\forall \ \Kc \subseteq \{1,\hdots,K\}$. For a fixed $|\Kc|$,
define
\begin{equation} \label{eq:delta}
\delta_{|\Kc|} (b) = \sum_{i=1}^{m_\Kc} \min \left\{
\frac{b}{|\Kc|}, 2i-1+n_\Kc-m_\Kc \right\}.
\end{equation} We hence have
that
\[ \delta^*(b) \leq \min \{ \delta_1(b), \delta_2(b), \hdots, \delta_K(b) \}.\]
We term the above upper-bound as the {\em informed transmitter
bound}.

We now investigate the informed transmitter bound for the important
special case of $N_t = 1$ and $M = K$. Notice that for $b \leq K$,
we have that $\delta_1(b) = \delta_K(b) = b$. Let us examine how
the $\delta_{|\Kc|}$ in \eqref{eq:delta} behave, for $1 < |\Kc| < K$
in the range of $b \leq K$. In particular, we claim that
\[ \min \left\{
\frac{b}{|\Kc|}, 2i-1+K-|\Kc| \right\} = \frac{b}{|\Kc|}, \ i =
1,2,\hdots,|\Kc|.\] This is true for $b \leq K$ since for some
$\epsilon \geq 0$,
\begin{eqnarray*}
2i-1+K-|\Kc| < \frac{K}{|\Kc|} - \epsilon\\
\Leftrightarrow K < \frac{|\Kc|}{|\Kc|-1} (|\Kc|-2i+1-\epsilon)\\
\Leftrightarrow K < |\Kc| - |\Kc| \frac{2i-2+\epsilon}{|\Kc|-1},
\end{eqnarray*}
for $i = 1,2,\hdots,|\Kc|$. Setting $i=1$, this requires that
\begin{eqnarray*}
K < |\Kc| - \epsilon \frac{|\Kc|}{|\Kc|-1}\\
\Rightarrow K < |\Kc|,
\end{eqnarray*}
which is not true. This implies that $\delta_{|\Kc|}(b)=b \ \forall
\ b \leq K$ and $\forall \ 1 \leq |\Kc| \leq K$. In particular,
$\delta_{|\Kc|}(K)=K$. Now consider the case $b> K$. Notice that
$\delta_1(b) = K \ \forall \ b>K$. Since $\delta_{|\Kc|}(b)$ is an
increasing function of $b$, we have that
\begin{eqnarray*}
\delta_{|\Kc|}(b) &\geq& \delta_{|\Kc|}(K) \ \forall \ b \geq K\\
&=& K.
\end{eqnarray*}
This implies that
\[ \delta_{|\Kc|}(b) \geq \delta_1(b) \ \forall \ b. \]
The informed transmitter bound hence reduces to
\[ \delta^*(b) \leq \delta_1(b) = \min \{b,K \}.\]

\section{Separated source-channel coding for the MIMO-MAC\label{sec:separated}}

In this section, we compute the achievable distortion SNR exponent
under separated source-channel coding.

\subsection{Diversity multiplexing tradeoff (DMT) of the MIMO-MAC}
We first present a very brief overview of the DMT \cite{ZheTse} of
the MIMO-MAC \cite{TseVisZhe}, which is a metric of performance for
channel coding. According to the DMT formulation, we scale the rate
of transmission of the coding scheme $\mathcal{C}(\rho)$ as $r \log
\rho$, where $r$ denotes the {\em multiplexing gain}. We define the
{\em diversity gain} of the system to be
\[ d(r) = - \lim\limits_{\rho \rightarrow \infty} \frac{\log P_e(\rho)}{\log \rho}, \]
where $P_e(\rho)$ denotes the codeword error probability. The DMT
$d^*(r)$ is defined to be the supremum of all achievable diversity
gains. We consider a common diversity requirement of $d(r)$ for the
transmission of all UTs, and equal rate transmission. For the
general $K$ user MAC with $N_t$ antennas at each UT and $M$ antennas
at the BS, the DMT is given by \cite{TseVisZhe}
\[ d^*_{MAC}(r) = \left\{
\begin{array}{ll}
d^*_{N_t,M} (r), & r \leq \min \left\{N_t, \frac{M}{K+1} \right\}\\
d^*_{KN_t,M} (Kr), & r \geq \min \left\{N_t, \frac{M}{K+1} \right\}
\end{array} \right. , \]
where $d^*_{n_t,n_r}(r)$ is the DMT for the single user MIMO channel
with $n_t$ transmit and $n_r$ receive antennas \cite{ZheTse}, given
by the piecewise linear function interpolating the points
$(r,(n_t-r)(n_r-r))$ for integral $r = 0,1,\hdots,\min\{n_t,n_r\}$. 
The tradeoff performance can hence be divided into two regimes, the
lightly loaded regime corresponding to $r \leq \min \left\{N_t,
\frac{M}{K+1} \right\}$, and the heavily loaded regime corresponding
to $r \geq \min \left\{N_t, \frac{M}{K+1} \right\}$. In the lightly
loaded regime, the DMT of the MAC is as though there was only one
user in the system, i.e., single-user performance is achieved. In
the heavily loaded regime, the DMT of the MAC is as though all the
users pooled their antennas together into a single ``super-user'',
and transmit at $K$ times the single-user rate. The authors in
\cite{TseVisZhe} were also able to show that the dominant error
event in the lightly loaded regime corresponds to a single user
decoding in error, while the dominant error event in the heavily
loaded regime is that of all users decoding in error.

\subsection{Distortion exponent with separated source-channel coding}
We now turn to the analysis of the distortion SNR exponent of a
separated source-channel coding scheme for the MIMO-MAC.
A separated source-channel coding scheme consists of concatenating a
quantizer of rate $R_s$ bits/source sample with a channel code of
rate $R_c$ bpcu, with $R_s = b R_c$. It was shown in \cite{HocZeg}
that the end-to-end distortion achievable by a separated scheme is
upper-bounded by
\[ D_{sep}(\rho) \leq D_Q(R_s) + \kappa P_e(\rho), \]
where $D_Q(R_s)$ denotes the quantizer distortion-rate function,
$P_e(\rho)$ is the error probability of the channel code, and
$\kappa$ is a constant independent of $\rho$. Let $R_s$ and $R_c$
denote the rates of the quantizer and the channel encoder
respectively, with corresponding multiplexing gains $r_s$ and $r_c$.
Thus,
\[ D_{sep}(\rho) \ \dot\leq \ \rho^{-b r_c} + \kappa \rho^{-d^*_{MAC}(r_c)}. \]
Notice that there exist very simple scalar quantizers with
rate-distortion optimal scaling $\rho^{-b r_c}$ \cite{CaiNar}, and
DMT optimal codes with finite blocklength $K N_t + M - 1$
\cite{TseVisZhe}. The best possible distortion SNR exponent is
obtained by choosing $r_c$ such that the two exponents of the above
expression are balanced, i.e., such that $b r_c = d^*_{MAC}(r_c)$. These considerations lead to:

\begin{thm} \label{th:separated}
If $N_t \leq \frac{M}{K+1}$, or if $N_t > \frac{M}{K+1}$ and $b >
\frac{K+1}{M} d^*_{N_t,M}\left( \frac{M}{K+1} \right)$, then the
distortion SNR exponent of the separated source channel coding
scheme
\begin{eqnarray*}
\delta_{sep}(b) \geq b\frac{j d^*_{N_t,M}(j-1) - (j-1)d^*_{N_t,M}(j)
} {b+d^*_{N_t,M}(j-1) - d^*_{N_t,M}(j)}, \\ b \in \left(
\frac{d^*_{N_t,M}(j)}{j}, \frac{d^*_{N_t,M}(j-1)}{j-1} \right],
\end{eqnarray*}
for $j = 1,\hdots,\min\{N_t,M\}$. Else, the exponent is given by
\begin{eqnarray*}
\delta_{sep}(b) \geq b \frac{j d^*_{KN_t,M}(j-1) - (j-1)
d^*_{KN_t,M}(j)}{b - K^2 [ d^*_{KN_t,M}(j) - d^*_{KN_t,M}(j-1) ]},
\end{eqnarray*}
for $b$ between $\frac{K d^*_{KN_t,M}(j)}{j}$ and $\frac{K
d^*_{KN_t,M}(j-1)}{j-1}$, for $j = 1,\hdots,\min \left\{
M,\frac{N_t}{K} \right\}$.
\end{thm}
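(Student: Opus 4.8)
The plan is to obtain $\delta_{sep}(b)$ directly from the balancing condition $b r_c = d^*_{MAC}(r_c)$ flagged just before the statement. From the preceding bound, $D_{sep}(\rho)$ is, in the exponential-order sense, at most $\rho^{-b r_c}+\kappa\,\rho^{-d^*_{MAC}(r_c)}$, whose exponent is $\min\{b r_c,\,d^*_{MAC}(r_c)\}$; hence $\delta_{sep}(b)=\max_{r_c\ge 0}\min\{b r_c,\,d^*_{MAC}(r_c)\}$. First I would note that $b r_c$ is strictly increasing in $r_c$ while $d^*_{MAC}(r_c)$ is continuous and strictly decreasing (the single-user DMT $d^*_{n_t,n_r}$ is piecewise linear, convex and decreasing, and both branches of the MAC DMT inherit monotonicity). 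Thus the inner minimum rises and then falls, the maximum is attained at the unique crossing of the two curves, and there $\delta_{sep}(b)=b r_c$.

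Next I would substitute the piecewise-linear DMT. On $r\in[j-1,j]$ the single-user DMT is the line through $(j-1,d^*_{N_t,M}(j-1))$ and $(j,d^*_{N_t,M}(j))$, so $b r_c=d^*_{N_t,M}(r_c)$ becomes one linear equation in $r_c$; solving gives $r_c=\frac{j\,d^*_{N_t,M}(j-1)-(j-1)\,d^*_{N_t,M}(j)}{\,b+d^*_{N_t,M}(j-1)-d^*_{N_t,M}(j)\,}$, and multiplying by $b$ reproduces the first displayed formula. The admissible $b$-range for segment $j$ follows by forcing the crossing abscissa into $[j-1,j]$: $r_c=j$ gives the lower endpoint $b=d^*_{N_t,M}(j)/j$ and $r_c=j-1$ the upper endpoint $b=d^*_{N_t,M}(j-1)/(j-1)$, which (since the DMT is decreasing) are ordered as stated, with $j=1$ giving an unbounded upper end.

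For the heavily loaded branch I would repeat the computation with $d^*_{MAC}(r_c)=d^*_{KN_t,M}(K r_c)$. Writing the super-user multiplexing gain $s=K r_c$ and using linearity of $d^*_{KN_t,M}$ on $s\in[j-1,j]$, the equation $\tfrac{b}{K}s=d^*_{KN_t,M}(s)$ is again linear; solving and returning to $\delta_{sep}=b r_c=b s/K$ yields the second linear-fractional formula, while $s=j,\,j-1$ produce the range between $K\,d^*_{KN_t,M}(j)/j$ and $K\,d^*_{KN_t,M}(j-1)/(j-1)$. The most delicate step, which I would check against the endpoint-consistency condition (the formula must equal $d^*_{KN_t,M}(j)$ at $b=K\,d^*_{KN_t,M}(j)/j$, since there the crossing sits exactly at $s=j$), is the bookkeeping of the factor $K$: because the super-user gain is $K r_c$, every slope of $d^*_{KN_t,M}$ acquires a factor $K$ when re-expressed in $r_c$, and this factor must be propagated consistently through numerator, denominator and range.

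Finally I would resolve the case split via the regime boundary $r_0=\min\{N_t,M/(K+1)\}$ separating the two branches of $d^*_{MAC}$. If $N_t\le M/(K+1)$, the lightly loaded branch governs the whole relevant range and only the first formula applies. If $N_t> M/(K+1)$, one compares the crossing abscissa with $r_0$: the balance line $b r$ meets the lightly loaded branch at $r_0=M/(K+1)$ exactly when $b=\frac{K+1}{M}d^*_{N_t,M}(M/(K+1))$, so for larger $b$ the crossing lies below $r_0$ (Case~1) and for smaller $b$ above it (Case~2), matching the hypothesis. I expect the genuine obstacle to be verifying continuity and the slope ordering of $d^*_{MAC}$ at $r_0$, so that the crossing is unique and lands in the branch predicted by the $b$-threshold; the remaining algebra is the routine solution of a linear equation on each segment.
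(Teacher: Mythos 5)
Your strategy is exactly the paper's: bound the separated distortion by $\rho^{-br_c}+\kappa\,\rho^{-d^*_{MAC}(r_c)}$, balance the two exponents via $br_c=d^*_{MAC}(r_c)$, solve that equation segment by segment on the piecewise-linear DMT, and split the lightly/heavily loaded cases at the threshold $b=\frac{K+1}{M}\,d^*_{N_t,M}\!\left(\frac{M}{K+1}\right)$. Your lightly loaded computation coincides with the paper's and reproduces the first displayed formula correctly.

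The heavily loaded branch is where your proposal and the printed theorem part ways, and your instinct to run the endpoint-consistency check is precisely the right move: carried out, it shows that your (correct) algebra will \emph{not} reproduce the published formula. With $s=Kr_c$ and $\Delta_j \triangleq d^*_{KN_t,M}(j)-d^*_{KN_t,M}(j-1)$, solving $\frac{b}{K}s=d^*_{KN_t,M}(j-1)+(s-(j-1))\Delta_j$ gives
\[
\delta_{sep}(b)=\frac{b\,s}{K}=b\,\frac{j\,d^*_{KN_t,M}(j-1)-(j-1)\,d^*_{KN_t,M}(j)}{b-K\Delta_j},
\]
i.e., a factor $K$, not $K^2$, in the denominator; only the $K$-version passes your endpoint test (it returns $d^*_{KN_t,M}(j)$ at $b=K d^*_{KN_t,M}(j)/j$, whereas the $K^2$-version does not). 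A numerical sanity check: for $K=2$, $N_t=2$, $M=4$, $b=2/3$, the line $\frac{2}{3}r_c$ crosses $d^*_{4,4}(2r_c)$ at $r_c=3/2$ (where $d^*_{4,4}(3)=1$), so $\delta_{sep}=1$; the $K$-version gives $1$, while the printed $K^2$-version gives $4/7$. The discrepancy originates in the paper's own proof: its interpolation step writes $d^*_{KN_t,M}(Kr_c)=K\Delta_j\left[Kr_c-\frac{j-1}{K}\right]+d^*_{KN_t,M}(j-1)$, where the bracket should read $\left[r_c-\frac{j-1}{K}\right]$ (as written, the line fails to pass through $\left(\frac{j}{K},\,d^*_{KN_t,M}(j)\right)$ unless $K=1$), and this spurious inner factor of $K$ propagates to the $K^2$ in the theorem statement. (The stated index range $j=1,\hdots,\min\{M,N_t/K\}$ is similarly suspect: in the example above the relevant segments are $j=3,4$, so the range should run over the upper segments of $d^*_{KN_t,M}$, up to $\min\{M,KN_t\}$.) In short: your approach is the paper's approach, but executed as you describe it proves a corrected statement rather than the theorem as printed — do not force your bookkeeping of the factor $K$ to land on the stated $K^2$.
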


\begin{proof}
For $N_t \leq \frac{M}{K+1}$, the MIMO-MAC DMT coincides with the
single-user DMT for all $r$. The optimal distortion exponent in this
case is the same as that for the single-user case, obtained in
\cite{CaiNar}. For the case when $N_t > \frac{M}{K+1}$, the
transition between the lightly loaded and the heavily loaded regimes
occurs at $r_c = \frac{M}{K+1}$, at which point the diversity gain
can be shown to be
\begin{eqnarray*}
d^*_{N_t,M}\left( \frac{M}{K+1} \right) & = & \left( \frac{M}{K+1} - \left\lfloor \frac{M}{K+1} \right\rfloor \right) 
\left[ 2 \left\lfloor \frac{M}{K+1} \right\rfloor  - (N_t + M - 1) \right] \\
& & + \left( N_t - \left\lfloor \frac{M}{K+1} \right\rfloor \right) \left( M - \left\lfloor \frac{M}{K+1} \right\rfloor \right).
\end{eqnarray*}
The slope of the line connecting the origin with $\left(
\frac{M}{K+1}, d^*_{N_t,M}\left( \frac{M}{K+1} \right) \right)$ is
$\frac{K+1}{M} d^*_{N_t,M}\left( \frac{M}{K+1} \right)$. Hence if $b
> \frac{K+1}{M} d^*_{N_t,M}\left( \frac{M}{K+1} \right)$, then once
again the single-user solution from \cite{CaiNar} holds. For the
case when $b < \frac{K+1}{M} d^*_{N_t,M}\left( \frac{M}{K+1}
\right)$, we need to solve
\begin{equation} \label{eq:equate_exponents}
b r_c = d^*_{KN_t,M}(Kr_c).
\end{equation}
For $r_c = \frac{j}{K}$, $j =
0,1,\hdots, \min \left\{ M, \frac{N_t}{K} \right\}$, we have
\[ b = \frac{K d^*_{KN_t,M}(j)}{j} \]
Then $\delta_{sep}(b)$ interpolates the points $\left( \frac{K
d^*_{K N_t, M}(j)}{j}, d^*_{K N_t, M}(j) \right)$ for $j =
0,1,\hdots,$$\min\left\{ M, \frac{N_t}{K} \right\}$.

For $r_c \in \left[ \frac{j-1}{K}, \frac{j}{K} \right]$, the
function $d^*_{K N_t, M}(Kr_c)$ linearly interpolates the points
$\left( \frac{j-1}{K}, d^*_{K N_t, M}(j-1) \right)$ and $\left(
\frac{j}{K}, d^*_{K N_t, M}(j) \right)$. This gives us that
\begin{eqnarray*}
d^*_{K N_t, M}(Kr_c) & = & K \left( d^*_{K N_t, M}(j) - d^*_{K N_t,
M}(j-1) \right) \left[ Kr_c - \frac{j-1}{K} \right] + d^*_{K N_t, M}(j-1), \\
& & \;\;\; \mbox{for} \;\;\;  r_c \in \left[ \frac{j-1}{K}, \frac{j}{K} \right].
\end{eqnarray*}
Solving \eqref{eq:equate_exponents}, we hence obtain the exponent
\[ b r_c = b \frac{j d^*_{KN_t,M}(j-1) - (j-1) d^*_{KN_t,M}(j)}{b - K^2 \left[ d^*_{KN_t,M}(j) - d^*_{KN_t,M}(j-1) \right]}, \]
for $b$ between $K \frac{d^*_{KN_t,M}(j)}{j}$ and $K
\frac{d^*_{KN_t,M}(j-1)}{j-1}$.

\end{proof}

For the special case when $N_t=1$ and $M=K$, $d^*_{MAC}(r_c) =
K(1-r_c)$, leading us to choose
\begin{equation}\label{eq:delta_opt_separated}
r_c = \frac{K}{K+b}  \ \ \Rightarrow \ \  \delta_{sep}(b) \geq
\frac{bK}{K+b}.
\end{equation}

\subsection{Performance of successive interference cancellation (SIC) receivers\label{sec:SIC}}
We now consider the performance of low complexity MMSE and ZF SIC
receivers (i.e., MMSE and ZF V-BLAST receivers \cite{WolFosGolVal})
in terms of distortion SNR exponents. We show that these low
complexity receivers are very suboptimal (irrespective of the order
in which users are decoded), which suggests that joint decoding
techniques might be inevitable in this scenario. The suboptimality
in terms of distortion exponent is a direct consequence of the
suboptimality of both MMSE and ZF SIC receivers in terms of DMT. For
simplicity of exposition we restrict ourself to the case of $N_t =
1$ and $M = K$. It was shown in \cite{TseVisZhe} that these
receivers with no user ordering achieve a DMT of $d^*_{SIC}(r) =
1-r_c$. It was later shown in \cite{JiaVarLi} that no ordering,
including the V-BLAST optimal ordering \cite{WolFosGolVal} can
improve upon this DMT. In the separated case, this would result in a
very suboptimal distortion-SNR exponent of
\[ \delta_{sep,SIC}(b) = \frac{b}{1+b}. \]
Since $\delta(b) \geq 1$ is needed in order to achieve bounded or vanishing rate gap for the MIMO downlink (see Sec.~\ref{sec:channel_model}), we conclude that a digital ``separated'' scheme based on SIC fails to achieve this goal, and would result inevitably in a loss of degrees of freedom in the MIMO downlink due to the poor SNR exponent of the CSIT feedback.

\section{Hybrid Digital-Analog Coding Scheme for the MIMO-MAC\label{sec:hybrid}}

While the separated scheme is close to optimal for very low and very
high bandwidth efficiencies, it is in fact very suboptimal for a
wide range of bandwidth efficiencies, including the ones that are
most relevant for practical CSIT feedback scheme, such as $b$
between $2$ and $6$. One therefore needs to consider joint source
channel coding to improve upon the performance of the separated
source channel coding scheme. This approach has been employed with
good success in the case of the single-user MIMO channel in the
series of works in \cite{CaiNar,GunErk,BhaNarCai}. In the following
subsection, we consider a generalization of the hybrid
digital-analog scheme of \cite{CaiNar} for the MIMO-MAC, and
evaluate the distortion SNR exponent for this case. 

We restrict our attention to the case of bandwidth expansion, i.e.,
$b \geq 1$ (the case of bandwidth compression $b<1$ is not relevant
for our purpose since it would result in a loss of downlink
multiplexing gain \cite{CaiJinKobRav}). Along the lines of
\cite{CaiNar}, we consider the encoder shown in
Fig.~\ref{fig:BW_expansion} for each UT.

\begin{figure}[h]
\begin{center}
\centerline{\includegraphics[width=12cm]{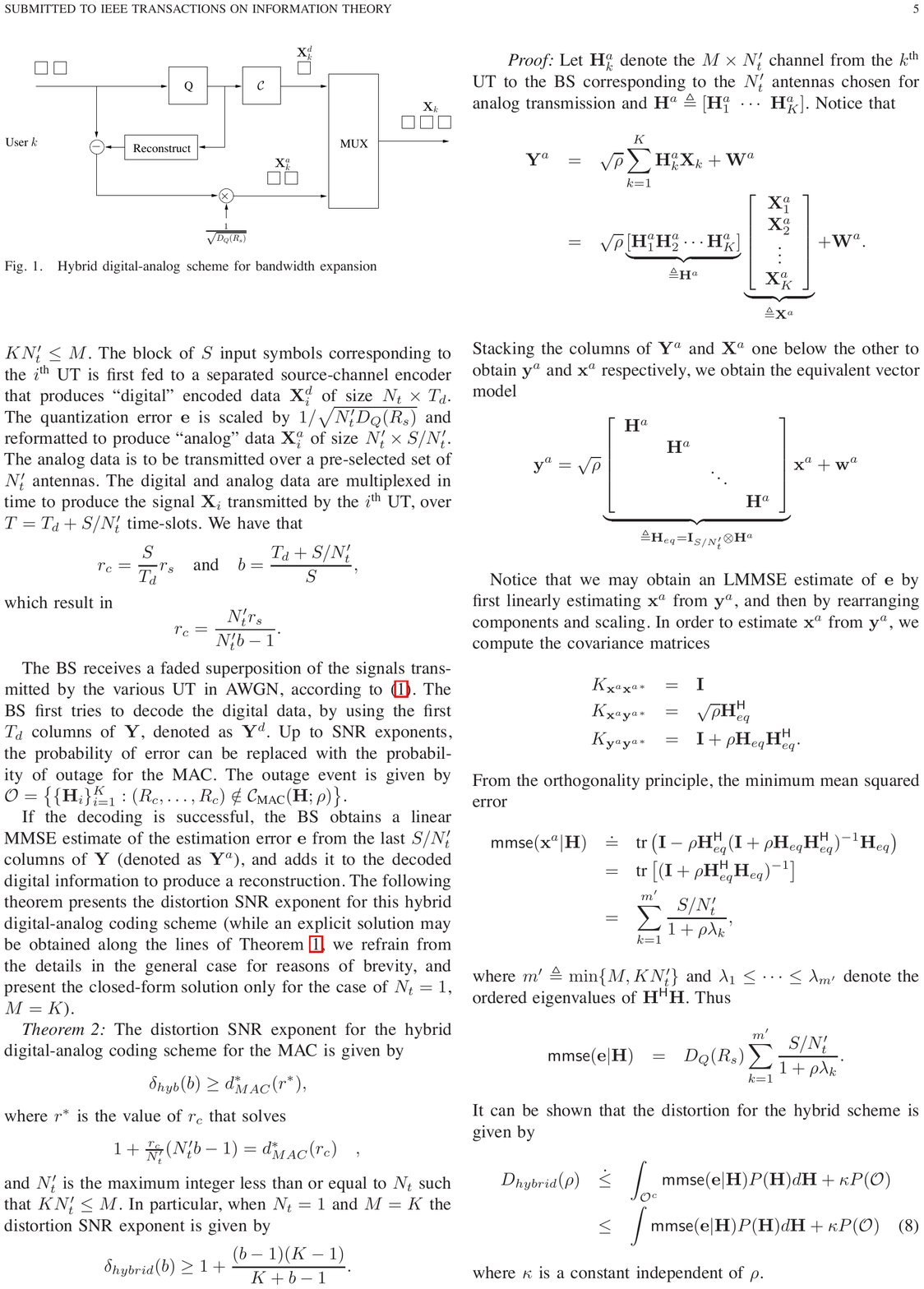}}
\caption{Hybrid digital-analog scheme for bandwidth
expansion\label{fig:BW_expansion}}
\end{center}
\end{figure}

We assume that $M \geq K$ in this section. Define $N_t'$ to be the
maximum integer less than or equal to $N_t$ such that $K N_t' \leq
M$. The block of $S$ input symbols corresponding to the
$i^{\text{th}}$ UT is first fed to a separated source-channel
encoder that produces ``digital'' encoded data $\Xm_i^d$ of size
$N_t \times T_d$. The quantization error $\ev$ is scaled by
$1/\sqrt{N_t' D_Q(R_s)}$ and reformatted to produce ``analog'' data
$\Xm_i^a$ of size $N_t' \times S/N_t'$. The analog data is to be
transmitted over a pre-selected set of $N_t'$ antennas. The digital
and analog data are multiplexed in time to produce the signal
$\Xm_i$ transmitted by the $i^\text{th}$ UT, over $T = T_d + S/N_t'$
time-slots. We have that
\[ r_c = \frac{S}{T_d} r_s \ \ \text{ and } \ \ b = \frac{T_d+S/N_t'}{S},\]
which result in
\[ r_c = \frac{N_t' r_s}{N_t' b-1}.\]

The BS receives a faded superposition of the signals transmitted by
the various UT in AWGN, according to \eqref{eq:MAC}. The BS first
tries to decode the digital data, by using the first $T_d$ columns
of $\Ym$, denoted as $\Ym^{d}$. Up to SNR exponents, the probability
of error can be replaced with the probability of outage for the MAC.
The outage event is given by $\Oc = \left\{ \{ \Hm_i \}_{i=1}^{K}:
(R_c,\hdots,R_c) \notin \Cc_{\text{MAC}}(\Hm; \rho) \right\}$.

If the decoding is successful, the BS obtains a linear MMSE estimate
of the estimation error $\ev$ from the last $S/N_t'$ columns of
$\Ym$ (denoted as $\Ym^a$), and adds it to the decoded digital
information to produce a reconstruction. The following theorem presents the distortion SNR exponent for this hybrid digital-analog coding scheme (while an explicit solution may be obtained along the lines of Theorem~\ref{th:separated}, we refrain from the details in the general case for reasons of brevity, and present the closed-form solution only for the case of $N_t = 1$, $M=K$).

\begin{thm} \label{th:hybrid}
The distortion SNR exponent for the hybrid digital-analog coding scheme for the MAC is given by
\[ \delta_{hyb}(b) \geq d^*_{MAC}(r^*), \]
where $r^*$ is the value of $r_c$ that solves
\begin{eqnarray*}
&1 + \frac{r_c}{N_t'}(N_t' b-1) = d^*_{MAC}(r_c)&,
\end{eqnarray*}
and $N_t'$ is the maximum integer less than or equal to $N_t$ such that $K N_t' \leq M$.
In particular, when $N_t = 1$ and $M=K$ the distortion SNR exponent is given by
\[ \delta_{hybrid}(b) \geq 1 + \frac{(b-1)(K-1)}{K+b-1}. \]
\end{thm}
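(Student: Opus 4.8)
The plan is to bound the end-to-end distortion by conditioning on the success of the digital layer, thereby isolating an \emph{outage floor} and an \emph{analog-refinement} term, and then to optimize the single free parameter $r_c$ so that the two exponents balance. First I would reuse the decomposition that underlies the separated bound of \cite{HocZeg}. When $\Hm \notin \Oc$ the digital codeword is recovered and subtracted, and the quantization error $\ev$ is re-estimated from $\Ym^a$; when $\Hm \in \Oc$ the reconstruction fails and the distortion is $O(1)$. Writing $\tilde\Hm$ for the $M \times KN_t'$ analog sub-channel, the LMMSE estimate of the unit-power reformatted error signal has error covariance $\Phim = (N_t'\Id + \rho\,\tilde\Hm^\herm\tilde\Hm)^{-1}$, so on the good event the per-sample distortion is, up to constants, $D_Q\,\trace(\Phim)$ with $D_Q \doteq \rho^{-r_s}$ and $r_s = \frac{r_c}{N_t'}(N_t'b-1)$. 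This yields
\[ D_{hyb}(\rho)\ \dot\leq\ \EE_{\Hm}\!\left[\rho^{-r_s}\,\trace(\Phim)\,\mathbf{1}\{\Hm\notin\Oc\}\right] + \rho^{-d^*_{MAC}(r_c)}, \]
the second term being the outage floor supplied by a DMT-optimal MAC code.

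Next I would evaluate the refinement term with the same large-deviations machinery used for the upper bound. Setting $\mu_j \doteq \rho^{-\alpha_j}$ for the ordered eigenvalues of $\tilde\Hm^\herm\tilde\Hm$ (with $\alpha_1 \geq \cdots \geq \alpha_{KN_t'} \geq 0$), the dominant diagonal entry gives $\trace(\Phim) \doteq \rho^{-(1-\alpha_1)^+}$, while the joint density \eqref{eq:joint_pdf} (now with $m = KN_t'$, $n = M$) weights the event by $\rho^{-E(\alpha)}$, $E(\alpha) = \sum_i (2i-1+M-KN_t')\alpha_i$. By Varadhan's lemma the exponent of the expectation is $\min_\alpha\big[r_s + (1-\alpha_1)^+ + E(\alpha)\big]$. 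The crucial point is that the coefficient of $\alpha_1$ in $E$ is $1+M-KN_t' \geq 1$, whence $(1-\alpha_1)^+ + E(\alpha) \geq (1-\alpha_1)^+ + \alpha_1 = \max\{1,\alpha_1\} \geq 1$, with equality already at $\alpha = 0$ (a typical channel, which lies outside $\Oc$ for every $r_c < 1$). Thus the refinement contributes exponent exactly $1 + r_s$, \emph{independently} of the detailed geometry of the MAC outage region; the only place the MAC structure enters is the floor $d^*_{MAC}(r_c)$.

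Combining the two terms gives $D_{hyb}(\rho) \doteq \rho^{-\min\{d^*_{MAC}(r_c),\,1+r_s\}}$. Since $d^*_{MAC}$ is decreasing while $1+r_s = 1+\frac{r_c}{N_t'}(N_t'b-1)$ is increasing in the free parameter $r_c$, the minimum is maximized at the crossing $1+\frac{r_c}{N_t'}(N_t'b-1) = d^*_{MAC}(r_c)$, i.e.\ at $r_c = r^*$, giving $\delta_{hyb}(b) \geq d^*_{MAC}(r^*)$, which is the general claim. I expect the main obstacle to be making the reduction $\trace(\Phim) \doteq \rho^{-(1-\alpha_1)^+}$ together with the Varadhan estimate rigorous uniformly on the non-outage event; fortunately the inequality above is exactly what rescues this step, since it renders the refinement exponent insensitive to how $\{\Hm \notin \Oc\}$ cuts the $\alpha$-simplex, so that the only genuinely channel-specific input is the DMT $d^*_{MAC}$ itself.

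Finally I would specialize to $N_t = 1$, $M = K$, where $N_t' = 1$ and $d^*_{MAC}(r_c) = K(1-r_c)$. The balance equation then reads $K(1-r_c) = 1 + (b-1)r_c$, whose solution is $r^* = \frac{K-1}{K+b-1}$, and substituting back gives
\[ \delta_{hyb}(b) \geq K(1-r^*) = \frac{Kb}{K+b-1} = 1 + \frac{(b-1)(K-1)}{K+b-1}, \]
as asserted. As a sanity check, $b=1$ returns $\delta = 1$ (recovering analog feedback), while $b \to \infty$ returns $\delta \to K$, matching the informed-transmitter bound $\delta^*(b) \leq K$.
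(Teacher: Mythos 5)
Your proof is correct and follows essentially the same approach as the paper's: the same decomposition of the distortion into an outage term $\dot= \rho^{-d^*_{MAC}(r_c)}$ plus an LMMSE refinement term for the quantization error, the same large-deviations/Varadhan evaluation showing the refinement exponent equals $1+r_s$ (using that the coefficient of $\alpha_1$ in the eigenvalue exponent is $1+M-KN_t' \geq 1$), and the same balancing of the two exponents, including the identical algebra for the $N_t=1$, $M=K$ case. The only cosmetic differences are that the paper drops the non-outage indicator by upper-bounding the integral over all channels and works with the stacked block-diagonal channel $\Id_{S/N_t'}\otimes\Hm^a$ rather than a single block, neither of which affects the exponent.
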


\begin{proof}
Let $\Hm_k^a$
denote the $M \times N_t'$ channel from the $k^{\text{th}}$ UT to
the BS corresponding to the $N_t'$ antennas chosen for analog
transmission and $\Hm^a \triangleq [\Hm_1^a \ \cdots \ \Hm_K^a]$.
Notice that
\begin{eqnarray*}
\Ym^a &=& \sqrt{\rho} \sum_{k=1}^{K} \Hm_k^a \Xm_k + \Wm^a\\
&=& \sqrt{\rho} \underbrace{[\Hm_1^a \Hm_2^a \cdots
\Hm_K^a]}_{\triangleq \Hm^a} \underbrace{\left[
\begin{array}{c}
\Xm^a_1\\
\Xm^a_2\\
\vdots\\
\Xm^a_K
\end{array}
\right]}_{\triangleq \Xm^a} + \Wm^a.
\end{eqnarray*}
Stacking the columns of $\Ym^a$ and $\Xm^a$ one below the other to
obtain $\yv^a$ and $\xv^a$ respectively, we obtain the equivalent
vector model
\[ \yv^a = \sqrt{\rho} \underbrace{\left[
\begin{array}{cccc}
  \Hm^a &  &  &  \\
    & \Hm^a &  &  \\
   &  & \ddots &  \\
   &  &  & \Hm^a
\end{array}
\right]}_{\triangleq \Hm_{eq} = \Id_{S/N_t'} \otimes \Hm^a} \xv^a +
\wv^a\]

Notice that we may obtain an LMMSE estimate of $\ev$ by first
linearly estimating $\xv^a$ from $\yv^a$, and then by rearranging
components and scaling. In order to estimate $\xv^a$ from $\yv^a$,
we compute the covariance matrices
\begin{eqnarray*}
K_{\xv^a {\xv^a}^*} &=& \Id\\
K_{\xv^a {\yv^a}^*} &=& \sqrt{\rho} \Hm_{eq}^\herm\\
K_{\yv^a {\yv^a}^*} &=& \Id + \rho \Hm_{eq} \Hm_{eq}^\herm.
\end{eqnarray*}
From the orthogonality principle, the minimum mean squared error
\begin{eqnarray*}
\mmse(\xv^a|\Hm) &\doteq& \trace \left( \Id - \rho \Hm_{eq}^\herm
(\Id +
\rho \Hm_{eq} \Hm_{eq}^\herm)^{-1} \Hm_{eq} \right)\\
&=& \trace \left[ (\Id + \rho \Hm_{eq}^\herm \Hm_{eq})^{-1}
\right]\\
&=& \sum_{k=1}^{m'} \frac{S/N_t'}{1 + \rho \lambda_k},
\end{eqnarray*}
where $m' \triangleq \min\{ M, K N_t'\}$ and $\lambda_1 \leq \cdots
\leq \lambda_{m'}$ denote the ordered eigenvalues of $\Hm^\herm
\Hm$. Thus
\begin{eqnarray*}
\mmse(\ev|\Hm) &=& D_Q(R_s) \sum_{k=1}^{m'} \frac{S/N_t'}{1 + \rho
\lambda_k}.
\end{eqnarray*}
It can be shown that the distortion for the hybrid scheme is given
by
\begin{eqnarray} \label{eq:D_hybrid}
D_{hybrid}(\rho) &\dot\leq& \int_{\Oc^c} \mmse(\ev|\Hm) P(\Hm) d\Hm
+
\kappa P(\Oc) \nonumber \\
&\leq& \int \mmse(\ev|\Hm) P(\Hm) d\Hm + \kappa P(\Oc)
\end{eqnarray}
where $\kappa$ is a constant independent of $\rho$.

Let $\lambda_i \doteq \rho^{-\alpha_i}$. Using the joint pdf
$p(\alphav)$ given in \eqref{eq:joint_pdf}, we compute
\begin{eqnarray*}
\int \mmse(\ev|\Hm) P(\Hm) d\Hm &\doteq&
\rho^{-r_s} \int\limits_{\alpha_i \geq 0}
\rho^{-(1-\alpha_1)^+}  \rho^{- \sum_{i=1}^{m'} (2i-1+|M-KN_t'|)\alpha_i} d\alphav \\
&\doteq& \rho^{-r_s} \rho^{- \inf\limits_{\alpha_i \geq 0 \ \forall
\ i} f(\alphav) },
\end{eqnarray*}
where $f(\alphav) = (1-\alpha_1)^+ + \sum_{i=1}^{m'} (2i-1+|M-KN_t'|)\alpha_i$. Hence
\begin{equation*}
\int \mmse(\ev|\Hm) P(\Hm) d\Hm \doteq \rho^{-(1+r_s)}.
\end{equation*}
The distortion SNR exponent for the hybrid
scheme is obtained by equating the SNR exponents of the two terms in \eqref{eq:D_hybrid}. 
\end{proof}

Similar to the separated source channel coding case, it can be shown
that successive interference cancellation is very suboptimal in this
case also. Fig.~\ref{fig:distor_exponents_K4} shows plots of the
distortion exponent for $K = 4$ obtained from separated source
channel coding and hybrid digital-analog coding in comparison to the
informed transmitter upper bound and the analog scheme.
\begin{figure}
\centering 
\includegraphics[width=10cm]{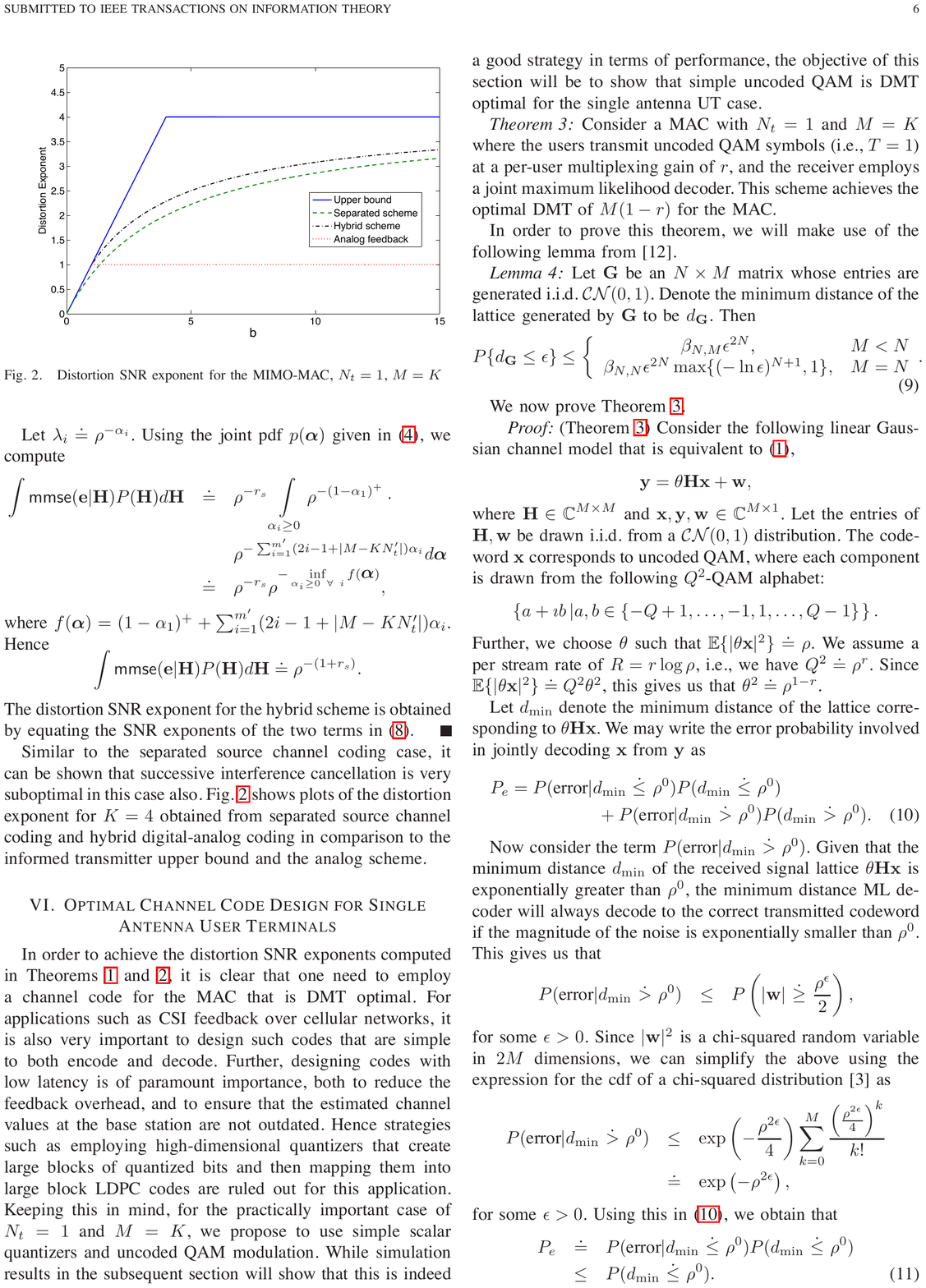}
\caption{Distortion SNR exponent for the MIMO-MAC, $N_t = 1$, $M =
K$\label{fig:distor_exponents_K4}}
\end{figure}

\section{Optimal Channel Code Design for Single Antenna User Terminals\label{sec:QAM_DMT}}

In order to achieve the distortion SNR exponents computed in Theorems~\ref{th:separated} and \ref{th:hybrid}, 
a DMT optimal code for the MIMO MAC is needed. 
For applications such as CSIT feedback over cellular networks, it is very important to design such codes that are simple to both encode and decode. Further, designing codes with low latency is of paramount importance, both to reduce the feedback overhead, and to ensure that the estimated channel values at the base station are not outdated. 
Hence strategies such as employing high-dimensional quantizers that create large blocks of quantized bits and then mapping them into 
channel codes with large block length (e.g., LDPC codes) 
are ruled out for this application. Keeping this in mind, for the practically important case of $N_t = 1$ and $M = K$, we propose to use 
simple scalar quantization and uncoded QAM modulation. 
While simulation results in the subsequent section will show that this is indeed a very good strategy in terms of performance, 
the objective of this section will be to show that uncoded QAM is indeed DMT optimal for the MIMO-MAC with single 
antenna UTs.

\vspace{12pt}

\begin{thm} \label{th:QAM_DMT}
Consider a MAC with $N_t = 1$ and $M = K$ where the users transmit uncoded QAM symbols (i.e., $T = 1$) at a per-user multiplexing gain of $r$, and the receiver employs a joint maximum likelihood decoder. This scheme achieves the optimal DMT of $M(1-r)$ for the MAC.
\end{thm}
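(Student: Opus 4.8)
The plan is to prove the achievability bound $d(r)\ge M(1-r)$ only: since $M(1-r)=K(1-r)$ is exactly the MAC DMT $d^*_{MAC}(r)$ for $N_t=1,\,M=K$ established earlier, and the DMT is an upper bound on the diversity gain of \emph{any} scheme, matching it from below settles optimality. With $T=1$ and $N_t=1$ the model \eqref{eq:MAC} collapses to $\yv=\sqrt{\rho}\,\Hm\xv+\wv$, where $\Hm=[\hv_1\cdots\hv_K]$ is $K\times K$ with i.i.d.\ $\mathcal{CN}(0,1)$ entries, $\wv\sim\mathcal{CN}(\zerov,\Id)$, and $\xv=(x_1,\dots,x_K)^\transp$ stacks the users' QAM symbols. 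At per-user multiplexing gain $r$ each $x_i$ lies in a unit-energy QAM constellation with $\doteq\rho^{r}$ points, hence with minimum spacing $\doteq\rho^{-r/2}$, so every error difference can be written $\xv-\xv'\doteq\rho^{-r/2}\zv$ with $\zv\in\ZZ[i]^K\setminus\{\zerov\}$ and $\|\zv\|_\infty\dot\leq\rho^{r/2}$.

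First I would condition on $\Hm$ and apply the union bound over pairwise errors to the joint ML decoder. Using $P(\xv\to\xv'\mid\Hm)\le\exp(-c\,\rho\,\|\Hm(\xv-\xv')\|^2)$ for a constant $c>0$ and substituting the minimum spacing gives
\[ P_e(\Hm)\;\le\;\sum_{\zv\neq\zerov}\exp\!\left(-c\,\rho^{1-r}\,\|\Hm\zv\|^2\right). \]

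The crucial step is to average over $\Hm$ \emph{before} summing, which is legitimate as all terms are nonnegative. For a fixed $\zv$ the vector $\Hm\zv=\sum_i z_i\hv_i$ is $\mathcal{CN}(\zerov,\|\zv\|^2\Id)$, so $\|\Hm\zv\|^2=\|\zv\|^2\sum_{j=1}^{K}|g_j|^2$ with $g_j$ i.i.d.\ $\mathcal{CN}(0,1)$, and the exponential moment generating function yields
\[ \EE_{\Hm}\!\left[\exp\!\left(-c\,\rho^{1-r}\|\Hm\zv\|^2\right)\right]\;=\;\left(1+c\,\rho^{1-r}\|\zv\|^2\right)^{-K}. \]
Grouping the Gaussian-integer vectors by $\|\zv\|^2\approx t$, whose count grows as $\doteq t^{K-1}$, and invoking the constellation cutoff $\|\zv\|^2\dot\leq\rho^{r}$, I then obtain
\[ \EE_{\Hm}[P_e(\Hm)]\;\dot\leq\;\sum_{t=1}^{\rho^{r}}t^{K-1}\bigl(\rho^{1-r}t\bigr)^{-K}\;\doteq\;\rho^{-K(1-r)}\sum_{t=1}^{\rho^{r}}\frac1t\;\doteq\;\rho^{-K(1-r)}, \]
because the harmonic sum contributes only a $\log\rho\doteq\rho^{0}$ factor. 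Hence $d(r)\ge K(1-r)$, and together with the converse $d^*_{MAC}(r)=M(1-r)$ this proves the claim.

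The main obstacle is exactly this last sum: the number of competing error vectors grows polynomially with the constellation, i.e.\ with $\rho$, so the union bound runs over a set that is unbounded in $\rho$, and one must check that this multiplicity does not erode the exponent. The averaging over $\Hm$ is what tames it, turning each pairwise term into $(1+c\,\rho^{1-r}\|\zv\|^2)^{-K}$; the ensuing sum over shells is only logarithmically divergent, so the finiteness of the QAM constellation ($\|\zv\|^2\dot\leq\rho^{r}$) is precisely what keeps the multiplicity subpolynomial ($\doteq\rho^{0}$)---an unbounded lattice would make it diverge. The dominant contribution comes from the minimal error vectors (a single user in error by one minimum-distance step, $\|\zv\|^2=1$), for which $(1+c\,\rho^{1-r})^{-K}\doteq\rho^{-K(1-r)}$ recovers precisely the single-user-in-error event that the MAC analysis of \cite{TseVisZhe} identifies as dominant in the lightly loaded regime.
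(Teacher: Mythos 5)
Your proof is correct, but it follows a genuinely different route from the paper's. The paper never forms a union bound: it conditions on the minimum distance $d_{\min}$ of the received lattice $\theta \Hm \xv$, shows that the event $\{ d_{\min} \ \dot> \ \rho^0\}$ contributes only a doubly-exponentially small error term, and then bounds $P(d_{\min} \ \dot\leq \ \rho^0) = P(d_\Hm \ \dot\leq \ \rho^{-(1-r)/2})$ directly via the random-lattice minimum-distance estimate of Lemma~\ref{lem:khandani} (from \cite{TahMobKha}), which yields $\rho^{-M(1-r)}$ up to a $(\ln\rho)^{M+1}$ factor. Your argument replaces that imported lemma with an elementary, self-contained computation: pairwise Chernoff bounds, averaging over $\Hm$ per Gaussian-integer difference vector to get $(1+c\,\rho^{1-r}\|\zv\|^2)^{-K}$, and shell counting with the constellation cutoff $\|\zv\|^2 \ \dot\leq \ \rho^{r}$ taming the multiplicity to a harmonic (hence $\doteq \rho^0$) factor; it also makes the dominant error events (single user off by one spacing) explicit, nicely matching the error-event analysis of \cite{TseVisZhe}. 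What the paper's approach buys in exchange is exactly the point you flag as the limitation of yours: since its argument depends only on the minimum distance of the infinite lattice and not on the finiteness of the QAM alphabet, it extends verbatim to the ``naive lattice decoder'' (see the Remark following the theorem), whereas your union bound diverges over an unbounded lattice and would need modification to cover that case. Two minor polish items if you write this up: justify the shell count $\doteq t^{K-1}$ by dyadic grouping (counts in individual unit shells fluctuate, but the ball-count bound $\dot\leq t^K$ per dyadic block suffices), and note that for $N_t=1$, $M=K$ the single-user-error and all-users-error events have the same exponent $K(1-r)$, so ``dominant'' should read ``co-dominant.''
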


\vspace{12pt}

\noindent
In order to prove this theorem, we will make use of the following lemma from \cite{TahMobKha}.

\vspace{12pt}

\begin{lem} \label{lem:khandani}
Let $\Gm$ be an $N \times M$ matrix whose entries are generated i.i.d. $\Cc \Nc (0,1)$. Denote the minimum distance of the lattice generated by $\Gm$ to be $d_{\Gm}$. Then
\begin{equation} \label{eq:lemma}
P \{ d_\Gm \leq \epsilon \} \leq \left\{
\begin{array}{cc}
\beta_{N,M} \epsilon^{2N}, & M<N\\
\beta_{N,N} \epsilon^{2N} \max\{ (- \ln \epsilon)^{N+1}, 1 \}, & M=N
\end{array}
. \right. 
\end{equation}
\end{lem}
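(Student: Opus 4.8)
The plan is to prove \eqref{eq:lemma} by a union bound over the nonzero lattice generators, controlled by a small-ball estimate for a single generator, followed by a convergence analysis of the resulting lattice sum; the borderline case $M=N$ is where the real work lies. Throughout I read the lattice generated by $\Gm$ as $\Gm\,\ZZ[i]^M\subset\CC^N$, so that by linearity $d_\Gm=\min_{\zv\in\ZZ[i]^M\setminus\{\zerov\}}\|\Gm\zv\|$, and the event $\{d_\Gm\le\epsilon\}$ is the union over nonzero $\zv$ of the events $\{\|\Gm\zv\|\le\epsilon\}$. The key structural fact is the dimension count: the image lattice lives in $\CC^N\cong\RR^{2N}$ while the index set $\ZZ[i]^M$ has real dimension $2M$, and it is precisely the comparison of these two that separates the regimes $M<N$ and $M=N$.

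First I would establish the per-generator small-ball bound. For a fixed $\zv\ne\zerov$, independence of the rows of $\Gm$ and the $\Cc\Nc(0,1)$ law of its entries give $\Gm\zv\sim\Cc\Nc(\zerov,\|\zv\|^2\Id_N)$, so $\|\Gm\zv\|^2/\|\zv\|^2$ is a sum of $N$ i.i.d. unit-mean exponentials, i.e. a Gamma$(N,1)$ variate. Bounding the lower incomplete gamma by $\int_0^t\frac{x^{N-1}e^{-x}}{(N-1)!}\,dx\le t^N/N!$ with $t=\epsilon^2/\|\zv\|^2$ yields
\[ P\{\|\Gm\zv\|\le\epsilon\}\le\frac{1}{N!}\Big(\frac{\epsilon^2}{\|\zv\|^2}\Big)^{N}=\frac{\epsilon^{2N}}{N!\,\|\zv\|^{2N}}. \]
Summing over the lattice and defining $Z_M:=\sum_{\zv\ne\zerov}\|\zv\|^{-2N}$ gives $P\{d_\Gm\le\epsilon\}\le\frac{\epsilon^{2N}}{N!}\,Z_M$, reducing the lemma to understanding this Epstein-zeta-type sum. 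Since the number of $\zv\in\ZZ[i]^M$ with $\|\zv\|\le R$ grows like $R^{2M}$, comparison with $\int^\infty R^{2M-1-2N}\,dR$ shows $Z_M<\infty$ exactly when $2N>2M$. For $M<N$ this is the case, and taking $\beta_{N,M}=Z_M/N!$ settles the first branch of \eqref{eq:lemma}.

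The hard part, and the main obstacle, is the borderline case $M=N$, where $Z_N$ diverges logarithmically and the naive union bound is vacuous. Here I would truncate the union at a radius $T$. The finite part is $\sum_{0<\|\zv\|\le T}\|\zv\|^{-2N}=O(\ln T)$ (each dyadic shell $\|\zv\|\in[2^k,2^{k+1}]$ holds $\asymp 2^{2Nk}$ points of weight $\asymp 2^{-2Nk}$, hence contributes $O(1)$), giving a contribution $\lesssim\epsilon^{2N}\ln T$. For the tail $\|\zv\|>T$ I would not union-bound at all but instead use $\|\Gm\zv\|\ge\sigma_{\min}(\Gm)\,\|\zv\|$: a lattice vector of length $\le\epsilon$ with $\|\zv\|>T$ forces $\sigma_{\min}(\Gm)\le\epsilon/T$, and for a square complex Gaussian matrix $P\{\sigma_{\min}(\Gm)\le s\}=O(s^2)$, so the tail costs $O((\epsilon/T)^2)$.

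Finally I would balance the two pieces by choosing $T\sim\epsilon^{1-N}$, so that both the truncated sum and the singular-value tail are $\lesssim\epsilon^{2N}(-\ln\epsilon)$. I expect the simultaneous control of the divergent borderline sum and the tail to be the delicate step; note that this clean argument already gives a single power of $-\ln\epsilon$, which for small $\epsilon$ is stronger than the stated $(-\ln\epsilon)^{N+1}$, so \eqref{eq:lemma} follows a fortiori (with the $\max\{\cdot,1\}$ absorbing the range of $\epsilon$ near $1$). The looser power $N+1$ recorded in \cite{TahMobKha} evidently comes from a cruder bookkeeping of these same two contributions; since any polylogarithmic factor is inert at the level of SNR exponents, the exact power is immaterial for the use made of the lemma here.
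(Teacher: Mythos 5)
Your proof is correct, and the first thing to note is that the paper itself contains no proof of this lemma: it is imported verbatim from \cite{TahMobKha}, so the meaningful comparison is with that source. Your route is genuinely different from the original one. The proof in \cite{TahMobKha} proceeds by an induction on the lattice dimension, peeling off one generator at a time via its component orthogonal to the span of the remaining ones, and each inductive step costs a logarithmic factor --- that is where the exponent $N+1$ on $-\ln\epsilon$ comes from. You instead combine three ingredients: (i) the per-generator small-ball estimate $P\{\|\Gm\zv\|\le\epsilon\}\le \epsilon^{2N}/(N!\,\|\zv\|^{2N})$, which is right since $\Gm\zv\sim\Cc\Nc(\zerov,\|\zv\|^2\Id_N)$ makes $\|\Gm\zv\|^2/\|\zv\|^2$ a Gamma$(N,1)$ variable; (ii) a truncated union bound whose dyadic-shell sum in the critical case $M=N$ is logarithmic in the truncation radius $T$; and (iii) a smallest-singular-value tail handling the far generators, since $\|\Gm\zv\|\le\epsilon$ with $\|\zv\|>T$ forces $\sigma_{\min}(\Gm)\le\epsilon/T$. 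The balancing $T\sim\epsilon^{1-N}$ is done correctly and yields $O\left(\epsilon^{2N}\max\{-\ln\epsilon,1\}\right)$, which implies the stated bound a fortiori; the Epstein-zeta convergence argument for $M<N$ is also right. Two small patches are needed. First, the ingredient $P\{\sigma_{\min}(\Gm)\le s\}=O(s^2)$ deserves a statement or citation: it is exactly the classical fact (due to Edelman) that $\lambda_{\min}$ of a square complex Wishart matrix is exponential with rate $N$, so $P\{\sigma_{\min}(\Gm)\le s\}=1-e^{-Ns^2}\le Ns^2$; note that for \emph{real} Gaussian matrices the corresponding tail is only $O(s)$, so the complex structure is essential here. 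Second, the shell sum should be recorded as $O(1+\ln T)$ rather than $O(\ln T)$, so that the regime of $\epsilon$ near $1$ (where $T=\epsilon^{1-N}$ is close to $1$ and the logarithm vanishes) is covered --- your remark about $\max\{\cdot,1\}$ already anticipates this. What each approach buys: the inductive proof of \cite{TahMobKha} is self-contained within lattice geometry but loses log factors at every step, whereas your argument is shorter, leans on one classical random-matrix fact, and sharpens the polylogarithmic correction from $(-\ln\epsilon)^{N+1}$ to a single power of $-\ln\epsilon$. Since the lemma enters Theorem~\ref{th:QAM_DMT} only through exponential equalities in $\rho$, where any polylogarithmic factor is inert, either version suffices, and yours proves strictly more than is claimed.
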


We now prove Theorem~\ref{th:QAM_DMT}. 

\begin{proof} (Theorem~\ref{th:QAM_DMT})
Consider the following linear Gaussian channel model that is equivalent to \eqref{eq:MAC},
\[ \yv = \theta \Hm \xv + \wv, \]
where $\Hm \in \CC^{M \times M}$ and $\xv, \yv, \wv \in \CC^{M \times 1}$. Let the entries of  $\Hm, \wv$  be drawn i.i.d. from a $\Cc \Nc (0,1)$ distribution.  The codeword $\xv$ corresponds to uncoded QAM, where each component is drawn from the following $Q^2$-QAM alphabet:
\[ \left\{ a + \imath b \left|  a,b \in \{ -Q+1, \hdots, -1,1,\hdots, Q-1 \} \right. \right\}. \]
Further, we choose $\theta$ such that $\EE \{ |\theta \xv|^2 \} \doteq \rho$. We assume a per stream rate of $R = r \log \rho$, i.e., we have $Q^2 \doteq \rho^r$. Since $\EE \{ |\theta \xv|^2 \} \doteq Q^2 \theta^2$, this gives us that $\theta^2 \doteq \rho^{1-r}$.

Let $d_{\min}$ denote the minimum distance of the lattice corresponding to $\theta \Hm \xv$. We may write the error probability involved in jointly decoding $\xv$ from $\yv$ as
\begin{equation} \label{eq:error_prob}
P_e =  P(\text{error}| d_{\min} \ \dot\leq \ \rho^0) P(d_{\min} \ \dot\leq \ \rho^0) + 
P(\text{error}| d_{\min} \ \dot> \ \rho^0) P(d_{\min} \ \dot> \ \rho^0).
\end{equation}

Now consider the term $P(\text{error}| d_{\min} \ \dot> \ \rho^0)$. Given that the minimum distance $d_{\min}$ of the received signal lattice $\theta \Hm \xv$ is exponentially greater than $\rho^0$, the minimum distance ML decoder will always decode to the correct transmitted codeword if the magnitude of the noise is exponentially smaller than $\rho^0$. This gives us that
%
\begin{eqnarray*}
P(\text{error}| d_{\min} \ \dot> \ \rho^0) &\leq& P \left(|\wv| \ \dot\geq \ \frac{\rho^\epsilon}{2}\right),
\end{eqnarray*}
for some $\epsilon > 0$. Since $|\wv|^2$ is a chi-squared random variable in $2M$ dimensions, we can simplify the above using the expression for the cdf of a chi-squared distribution \cite{Pro} as
\begin{eqnarray*}
P(\text{error}| d_{\min} \ \dot> \ \rho^0) &\leq& \exp \left( -\frac{\rho^{2\epsilon}}{4} \right) \sum_{k=0}^{M} \frac{\left( \frac{\rho^{2\epsilon}}{4} \right)^k}{k!}\\
&\doteq& \exp \left( - \rho^{2 \epsilon} \right),
\end{eqnarray*}
for some $\epsilon > 0$. Using this in \eqref{eq:error_prob}, we obtain that
\begin{eqnarray} \label{eq:Pe_simplify}
P_e &\doteq& P(\text{error}| d_{\min} \ \dot\leq \ \rho^0) P(d_{\min} \ \dot\leq \ \rho^0) \nonumber\\
&\leq& P(d_{\min} \ \dot\leq \ \rho^0).
\end{eqnarray}

Noticing that $d_{\min} = \theta d_\Hm$, we obtain from \eqref{eq:Pe_simplify} and an application of Lemma~\ref{lem:khandani} that
\begin{eqnarray*}
P_e &\dot\leq& P \left(d_\Hm \ \dot\leq \ \frac{1}{\rho^{(1-r)/2}} \right)\\
&\doteq& \frac{1}{\rho^{M(1-r)}} (\ln \rho)^{n+1}\\
&\doteq& \rho^{-M(1-r)}
\end{eqnarray*}
\end{proof}

\begin{note}
A closer look at the proof of Theorem~\ref{th:QAM_DMT} will reveal that the same proof will carry over unchanged to the case where we use a minimum distance lattice decoder. By a minimum distance lattice decoder, we mean a sphere decoder \cite{DamGamCai} that decodes to the whole infinite lattice and not just the finite constellation carved from the lattice, without doing any special precoding or weighting. Such a decoder is referred to as a ``naive lattice decoder'' in \cite{GamCaiDam_LAST}, in contrast to the MMSE-GDFE lattice decoder \cite{GamCaiDam_LAST}, or the regularized lattice decoder \cite{JalEli}.
\end{note}

\section{Simulation results\label{sec:simul}}
In this section, we present simulation results comparing the case of
digital and analog CSIT feedback over the MIMO-MAC. We focus on the
case of $N_t=1$ and $K=M=4$ and restrict attention to separated
source channel coding for the digital feedback case. Further, we focus on the following extremely simple encoder, comprising of a uniform scalar quantizer \cite{CaiNar} followed by uncoded QAM for the channel code (these are respectively optimal with respect to quantizer distortion, and DMT). We will restrict attention to the case of joint ML decoding at the receiver (implemented using a sphere decoder \cite{DamGamCai}).


\begin{figure}
\centering
\includegraphics[width=10cm]{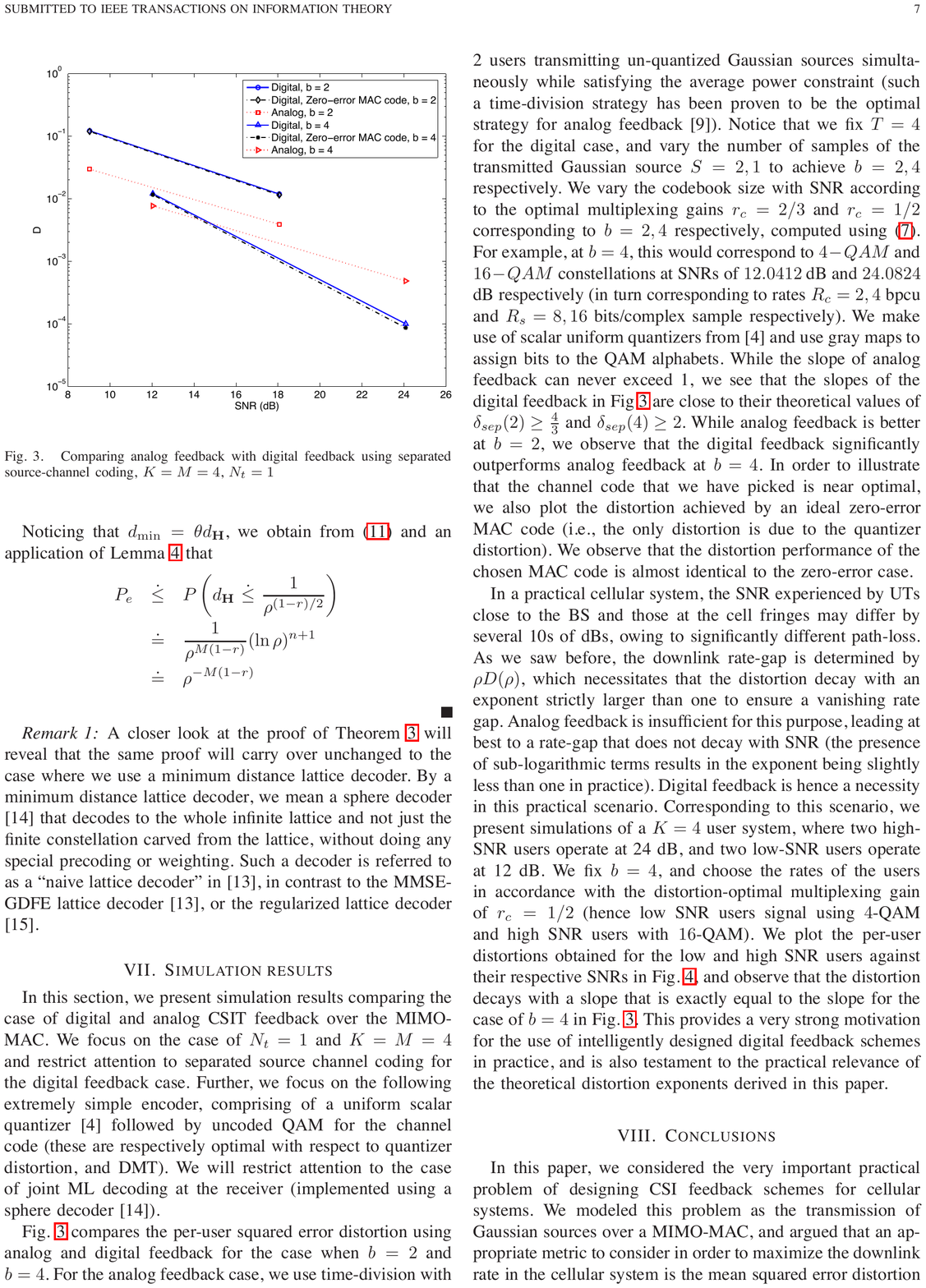}
\caption{Comparing analog feedback with digital feedback using
separated source-channel coding, $K = M = 4$, $N_t =
1$\label{fig:distor_K4_b2b4}}
\end{figure}
Fig.~\ref{fig:distor_K4_b2b4} compares the per-user squared error
distortion using analog and digital feedback for the cases of $b=2$
and $b=4$. For the analog feedback case, we use time-division with 2
users transmitting un-quantized Gaussian sources simultaneously
while satisfying the average power constraint (such a time-division
strategy has been proven to be the optimal strategy for analog
feedback \cite{CaiJinKobRav}). Notice that we fix $T = 4$ for the
digital case, and vary the number of samples of the transmitted
Gaussian source $S = 2,1$ to achieve $b=2,4$ respectively. We vary
the codebook size with SNR according to the optimal multiplexing
gains $r_c = 2/3$ and $r_c = 1/2$ corresponding to $b=2,4$
respectively, computed using \eqref{eq:delta_opt_separated}. For
example, at $b=4$, this would correspond to $4$-QAM and $16$-QAM
constellations at SNRs of $12.0412$ dB and $24.0824$ dB respectively
(in turn corresponding to rates $R_c = 2,4$ bpcu and $R_s = 8,16$
bits/complex sample respectively). We make use of scalar uniform
quantizers from \cite{CaiNar} and use gray maps to assign bits to
the QAM alphabets. While the slope of analog feedback can never
exceed 1, we see that the slopes of the digital feedback in
Fig~\ref{fig:distor_K4_b2b4} are close to their theoretical values
of $\delta_{sep}(2) \geq \frac{4}{3}$ and $\delta_{sep}(4) \geq 2$.
While analog feedback is better at $b=2$, we observe that the
digital feedback significantly outperforms analog feedback at $b=4$.
In order to illustrate that the channel code in this example 
is indeed near optimal, we also plot the distortion achieved by an ideal
zero-error MAC code, for which the only distortion is due to the source quantization. 
We observe that the distortion performance of the chosen MAC code is almost identical to the ideal
zero-error case. This does not mean that uncoded QAM achieves zero error at these values of SNRs, of course.
It just indicates that the probability of error scales with SNR together with the quantization distortion, such that 
the performance is effectively given by the quantization distortion, in agreement with the 
optimization of the SNR exponent in Sections \ref{sec:separated} and \ref{sec:hybrid}.

In a practical cellular system, the SNR experienced by UTs close to
the BS and those at the cell fringes may differ by several tens 
of dBs, owing to significantly different path-loss. As we saw before,
the downlink rate-gap is determined by $\rho D(\rho)$, which
necessitates that the distortion decays with an exponent strictly
larger than 1 to ensure a vanishing rate gap. 
Analog feedback is insufficient for this purpose, leading to a rate-gap that
does not decay with SNR (the presence of sub-logarithmic terms
results in the exponent being slightly less than one in practice).
Digital feedback is hence a necessity in this practical scenario.
Corresponding to this scenario, we present simulations of a $K=4$
user system, where two high-SNR users operate at 24 dB, and two
low-SNR users operate at 12 dB. We fix $b = 4$, and choose the rates
of the users in accordance with the distortion-optimal multiplexing
gain of $r_c = 1/2$ (hence low SNR users signal using $4$-QAM and
high SNR users with $16$-QAM).
\begin{figure}
\centering
\includegraphics[width=10cm]{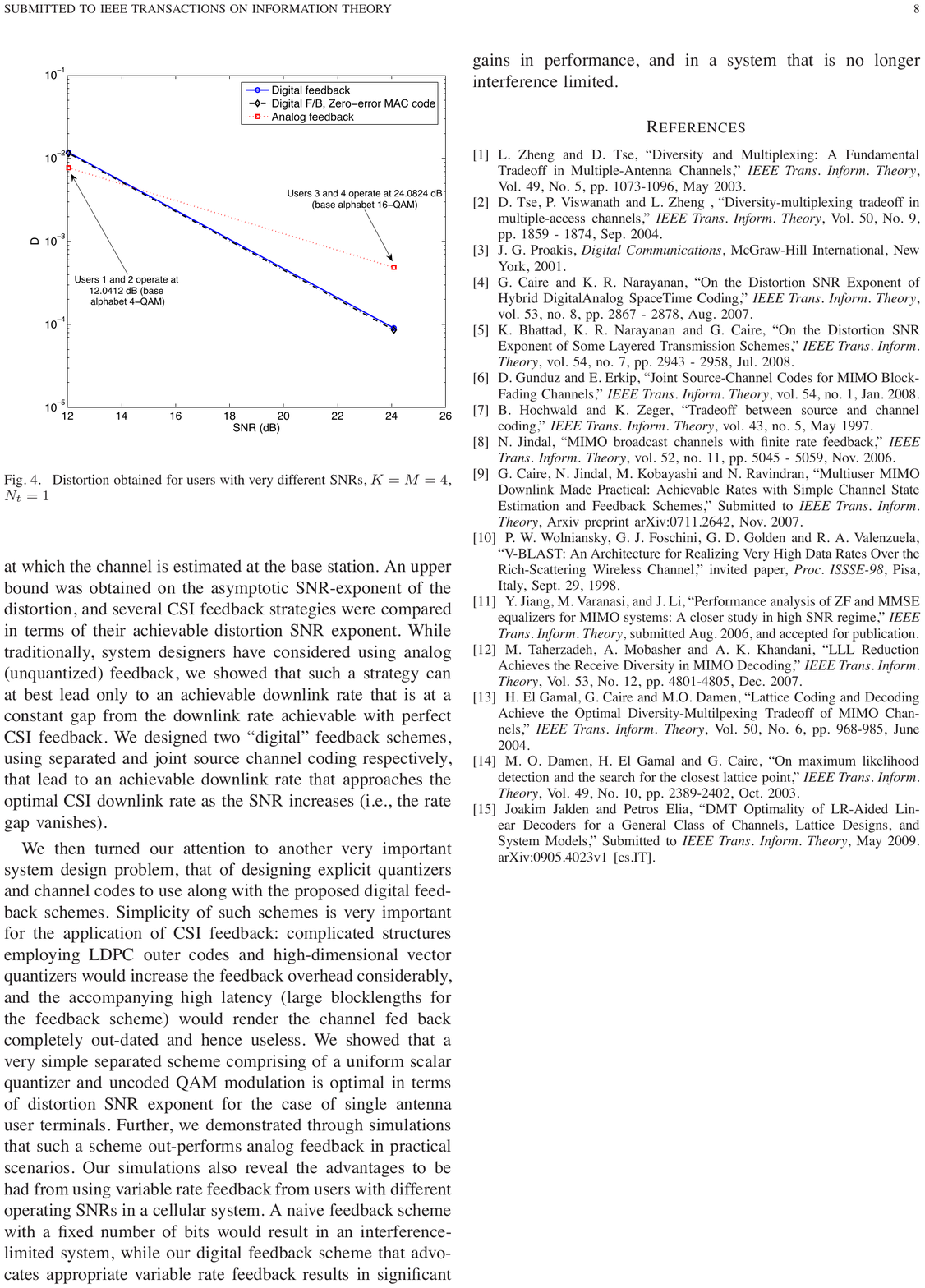}
\caption{Distortion obtained for users with very different SNRs, $K
= M = 4$, $N_t = 1$\label{fig:distor_K4_b4_assymetric}}
\end{figure}
We plot the per-user distortions obtained for the low and high SNR
users against their respective SNRs in
Fig.~\ref{fig:distor_K4_b4_assymetric}, and observe that the
distortion decays with a slope that is exactly equal to the slope
for the case of $b=4$ in Fig.~\ref{fig:distor_K4_b2b4}. This
provides a very strong motivation for the use of intelligently
designed digital CSIT feedback schemes in practice, and is also testament
to the practical relevance of the theoretical distortion exponents derived in this paper.

\section{Conclusions}

In this paper, we considered the very important practical problem of designing CSIT 
feedback schemes for cellular systems. We modeled this problem as the transmission of Gaussian sources over a MIMO-MAC, and argued that an appropriate metric to consider in order to maximize the downlink rate in the cellular system is the mean squared error distortion at which the channel is estimated at the base station. An upper bound was obtained on the asymptotic SNR-exponent of the distortion, and some 
CSIT feedback strategies were compared in terms of their achievable distortion SNR exponent. While traditionally, system designers have considered using analog (unquantized) feedback, we showed that such a strategy can at best lead only to an achievable downlink rate that is at a 
constant gap from the downlink rate achievable with perfect CSIT. We designed two ``digital'' feedback schemes, using separated and joint source channel coding respectively, that lead to an achievable downlink rate that approaches the optimal CSI downlink rate as the SNR increases (i.e., the rate gap vanishes).

We then turned our attention to another very important system design problem: the design of explicit quantization and channel coding 
schemes for the proposed digital feedback scheme. 
Simplicity of such schemes is very important for the application of CSIT feedback: complicated structures employing long 
outer codes and high-dimensional vector quantizers would increase the feedback overhead considerably, and the 
resulting high latency (large blocklengths for the feedback scheme) would render the CSIT fed back completely outdated and hence useless
for the purpose of multiuser MIMO downlink precoding. We showed that a very simple separated scheme comprising of a uniform scalar quantizer and uncoded QAM modulation is optimal in terms of distortion SNR exponent for the case of single antenna user terminals. Further, we demonstrated through simulations that such a scheme outperforms analog feedback in practical scenarios. 
Our simulations also reveal the advantage of using variable rate feedback for users with different operating SNRs 
in a cellular system with large pathloss dynamic range. 
A naive feedback scheme with a fixed number of bits would result in an interference-limited system,  while our digital feedback scheme  results in significant performance gains, especially for the users in very good SNR conditions.


%

\end{document}